\newtheorem{theorem}{Theorem}[section]
\newtheorem{lemma}{Lemma}
\def\BibTeX{{\rm B\kern-.05em{\sc i\kern-.025em b}\kern-.08em
    T\kern-.1667em\lower.7ex\hbox{E}\kern-.125emX}}
\begin{document}

\title{Robust Precoding Design for Coarsely Quantized MU-MIMO Under Channel Uncertainties--V0
\thanks{Dr. Qiu's  work  is partially supported  by  N.S.F. of China under Grant No.61571296 and N.S.F. of US under Grant No. CNS-1247778, No.
CNS-1619250. Dr. Wen's  work  is partially supported  by  N.S.F. of China under Grant No.61871265.}
}

\author{\IEEEauthorblockN{Lei Chu}
\IEEEauthorblockA{\textit{Dept. Electrical Engineering} \\
\textit{Shanghai Jiao Tong University}\\
Shanghai, China \\
leochu@sjtu.edu.cn}
\and
\IEEEauthorblockN{Fei Wen}
\IEEEauthorblockA{\textit{Dept. Electronic Engineering} \\
\textit{Shanghai Jiao Tong University}\\
Shanghai, China \\
wenfei@sjtu.edu.cn}
\and
\IEEEauthorblockN{Robert Caiming Qiu}
\IEEEauthorblockA{\textit{Dept. Electrical and Computer Engineering} \\
\textit{Tennessee Technological University}\\
Cookeville, TN 38505 USA\\
rqiu@tntech.edu}
}

\maketitle

\begin{abstract}
Recently, multi-user multiple input multiple output (MU-MIMO) systems with low-resolution digital-to-analog converters (DACs) has received considerable attention, owing to the capability of dramatically reducing the hardware cost. Besides, it has been shown that the use of low-resolution DACs enable great reduction in power consumption while maintain the performance loss within acceptable margin, under the assumption of perfect knowledge of channel state information (CSI).
In this paper, we investigate the precoding problem for the coarsely quantized MU-MIMO system without such an assumption.
The channel uncertainties are modeled to be a random matrix with finite second-order statistics.
By leveraging a favorable relation between the multi-bit DACs outputs and the single-bit ones, we first reformulate the original complex precoding problem into a nonconvex binary optimization problem.
Then, using the S-procedure lemma, the nonconvex problem is recast into a tractable formulation with convex constraints and finally solved by the semidefinite relaxation (SDR) method.
Compared with existing representative methods, the proposed precoder is robust to various
channel uncertainties and is able to support a MU-MIMO system with higher-order modulations, e.g., 16QAM.
\end{abstract}

\begin{IEEEkeywords}
MU-MIMO, low-resolution DACs, robust precoding.
\end{IEEEkeywords}

\section{Introduction}

MU-MIMO is one of the key techniques in the future communication system, in which the base station (BS) is equipped with large number of antennas \cite{b1}. Scaling up the number of transmit antennas brings up numerous advantages: enhanced system throughput, improved radiated energy efficiency and simplified algorithms of signal processing \cite{b2}.  On the other hand, gigantic increase in energy consumption, a key challenge to the use of massive MIMO, is brought by the increasing antenna quantity.

Recently, coarsely quantized MU-MIMO has been extensively investigated and recognized as an effective way to decrease the energy cost \cite{b117}. The authors in \cite{b3} show that, compared with massive MIMO systems with ideal DACs, the sum rate loss
in 1-bit massive MU-MIMO systems can be compensated by disposing approximately $\pi^2/4$ times more antennas at the BS. Besides, the recently advanced nonlinear precoding algorithms, such as the semidefinite relaxation based precoder \cite{b4}, the finite-alphabet (FA) precoder \cite{b5} and the alternating direction method of multipliers (ADMM) based precoder \cite{b6}, can improve the bit error rate (BER) performance. Generally speaking, equipping the BS with low-resolution DACs can greatly reduce the power consumption without introducing significant performance penalties by assuming perfect knowledge of CSI.

In practical situations, however, the CSI available to the BS is imperfect due to limited feedback and finite-energy
training, resulting in receivers performance degradation. Therefore, robust precoding design that takes into consideration the channel uncertainties is of great importance.

In this paper, a random matrix with finite second-order statistics is used to model the imperfection in CSI. We formulate the robust quantized precoding problem as minimization of inter user interference subject to two constraints: a discrete set (outputs of multi-bit DACs) constraint and a bounded CSI error constraint. By exploiting the overlap between the multi-bit DACs
outputs and the single-bit ones, we reformulate the original multi-variable discrete optimization problem into a binary optimization problem. Furthermore, based on the observation that the bounded CSI error constraint involves a quadratic form of some random variables, using the S-procedure lemma \cite{b7}, the bounded CSI error constraint is recast into a tractable
formulation with convex constraints. In this manner, we obtain the relaxed version of the original precoding problem, which is finally tackled by the standard semidefinite relaxation (SDR) method \cite{b8}. Lastly, the effectiveness of the proposed precoding algorithm is verified by numerical simulations.

Notations: Throughout this paper, vectors and matrices are given in
lower and uppercase boldface letters, e.g., $\bf{x}$ and $\bf{X}$, respectively. We use
${\left[ {\bf{X}} \right]_{k,l}}$ to denote the element at the $k$th row and $l$th column.
The symbols ${ \mathbb{E} \left[ {\bf{X}} \right] }$, ${ \rm{tr} \left( {\bf{X}} \right) }$, ${\mathop{\rm vec}\nolimits} \left( {\bf{X}}\right)$, ${{\bf{X}} ^{\rm T}}$, and ${{\bf{X}} ^{\rm H}}$ denote the expectation operator, the trace operator, the column-wise vectorization, the transpose,
the conjugate transpose of ${\bf{X}}$, respectively. For a vector ${\bf{x}}$, $\Re \left( {\bf{x}} \right)$, $\Im \left( {\bf{x}} \right)$,
and ${\left\| {\bf{x}} \right\|_2}$ are respectively used to represent the real part, the imaginary part and
$\ell_2$-norm of ${\bf{x}}$. The symbols $\bf{I}$ and $\bf{0}$  are respectively referred to an identity matrix and a zeros matrix with proper size.

\section{System Model}
\label{Sec:b}

We consider a single-cell MU-MIMO downlink system of one BS serving $K$ single-antenna user terminals (UTs). We assume the BS is equipped with $N$ transmit antennas and ignore the RF impairments. Besides, we follow \cite{b9} 
for the model of real-time downlink channel through a Gauss-Markov uncertainty of the form
\begin{equation}
\label{eq1}
{\bf{H}} = \sqrt {1 - \eta } {{\bf{\hat H}}} + \sqrt \eta  {\bf{E}},
\end{equation}
where ${{\bf{\hat H}}}$ is the imperfect observation of the channel available to the BS, and $\bf{E}$ denotes error matrix whose elements are independently sampled from a bounded set:
\begin{equation}
\label{eq1a}
\Gamma : = \left\{ {{\mathop{\rm tr}\nolimits} \left( {{\bf{E}}{{\bf{E}}^{\rm H}}} \right) \le {\eta}} \right\}.
\end{equation}
The parameter $\eta$ indicates the degree of uncertainty of the related channel measurement ${{\bf{\hat H}}}$. Specifically,  $\eta = 0$ means perfect CSI, the values of $0 < \eta  < 1$ correspond to partial CSI and $\eta = 1$ accounts for no CSI.

\subsection{Input-Output Relationship for MU-MIMO Downlink}

Let ${\bf{s}}$ be the $K$-dimensional UTs intended constellation points. With the knowledge of CSI $({{\bf{\hat H}}})$, the BS precodes ${\bf{s}}$ into a $N$-dimensional vector ${\bf{x}}$, satisfying the average power constraints ${\mathbb{E}_{\bf{s}}}\left[ {{\bf{x}}^\mathrm{H}}{{\bf{x}}} \right] \le P$. Assuming perfect synchronization, the received signal at UEs can be expressed as
\begin{equation}
\label{eq2}
{{\bf{y}}} = {\bf{H}}{{\bf{x}}} + {{\bf{n}}},
\end{equation}
where ${{\bf{n}}}$ is a complex vector with element $n_i$ being complex addictive Gaussian noise distributed as ${{n_i}} \thicksim \mathcal{CN} \left( {0,{\varepsilon ^2} } \right)$.

\subsection{Quantization}

Let ${\bf{P}} \in \mathbb{C}^{N \times K}$ be the precoding matrix and ${\bf{z}} = {\bf{Ps}}$ the precoded vector of the unquantized system. For the quantized MU-MIMO downlink system, each precoded signal component $z_i, \ i = 1, \cdots, N$ is quantized separately into a finite set of prescribed labels by a $B$-bit symmetric uniform quantizer $Q$. It is assumed that the real and imaginary parts of precoded signals are quantized separately. The resulting quantized signals read
\begin{equation}
\label{eq3}
{\bf{x}} = Q({\bf{z}}) = Q({\bf{Ps}}) = {Q( \Re \left( {{\bf{Ps}}} \right))} + {j Q( \Im\left( {{\bf{Ps}}} \right))}.
\end{equation}
Specially, the real-valued quantizer $Q(.)$ maps real-valued input (real parts or imaginary parts of the precoded signals) to a set of labels $\Omega = \left\{ {{l_1},{\kern 1pt} \cdots ,{l_{{2^B}}}} \right\}$, which are determined by the set of thresholds $\Gamma  = \left\{ {{\tau _1},{\kern 1pt}  \cdots ,{\tau _{{2^B} - 1}}} \right\}$, such that $ - \infty  = {\tau _1} <  \cdots  < {\tau _{{2^B-1}}} = \infty $. For a $B$-bit DAC with step size $\Delta$, the thresholds and quantization labels (outputs) are respectively given by
\begin{equation}
\label{eq4}
{\tau_b} = \Delta \left( {b - \frac{{{2^{B - 1}}}}{2}} \right), \  b = 1, \cdots ,{2^B -1},
\end{equation}
and \begin{equation}
\label{eq5}
{l_b} = \left\{ {\begin{array}{*{20}{c}}
{{\tau _b} - {\textstyle{\Delta  \over 2}},}&{b = 1, \cdots ,{2^B} - 1}\\
{\left( {{2^B} - 1} \right){\textstyle{\Delta  \over 2}},}&{b = {2^B}}
\end{array}} \right. .
\end{equation}
In the case of 1 bit DACs, the output set reduces to ${\Omega_1} = \left\{ { \pm \frac{\Delta }{2} \pm {\rm i} \frac{\Delta }{2}} \right\}$. For any output drawn from $B$-bit uniform quantifier, $x_n$ can be represented by
\begin{equation}
\label{eq6}
x_n = \sum\limits_{m=1}^M {{\omega_{m}}{v _{mn}}}, n = 1,2, \cdots, 2*N \ ,
\end{equation}
where $v _{mn} \in \Omega_1$, $M=log_2(2^B)=B$, and $ \omega _{m}$ are constant coefficients satisfying ${\omega_{m}} > 0,\sum\nolimits_{m = 1}^M {\omega_{m}^2}  = 1 .$ The relation in \eqref{eq6} indicates that the multiple DACs outputs can be represented by the linear combination of several independent single-bit DACs outputs.

\section{Robust Quantized Precoding Design}

\subsection{Problem Formulation}
Similar to \cite{b5}, we concentrate on a performance metric that minimizes inter user interference,
\[{\mathbb{E}_{\bf{s}}}\left\{ {\left\| {{\bf{s}} - \beta {\bf{Hx}}} \right\|_2^2} \right\},\]
under the average power constraint. $\beta$ is the unknown precoding factor taking into consideration the power constraint.
Specially, in the case of channel uncertainties, the quantized precoding problem can be formulated as
\begin{equation}
\label{eq7}
\begin{array}{*{20}{c}}
{\mathop {{\rm{minimize}}}\limits_{\beta  \in {R^ + }} }&{\mathbb{E}_{\bf{s}}}\left\{ {\left\| {{\bf{s}} - \beta {\bf{Hx}}} \right\|_2^2} \right\}\\
{{\rm{s.t.}}}&{{\bf{x}} \in {\Omega _B}}
\end{array},
\end{equation}
where ${\Omega _B} = \left\{ { - {l_{{2^{B - 1}}}}, \cdots , - {l_1},{l_1}, \cdots, {l_{{2^{B - 1}}}}} \right\}$. The complex-valued optimization problem in \eqref{eq7} can be equivalently rewritten as a real-valued problem:
\begin{equation}
\label{eq8}
\begin{array}{*{20}{c}}
{{\rm{minimize}}}&{\left\| {{\bf{\tilde s}} - \left( {{\bf{\tilde H}} + {\bf{\tilde E}}} \right){\bf{\tilde x}}} \right\|_2^2}\\
{{\mathop{\rm s.t.}\nolimits}}&{{\bf{\tilde x}} \in {{\tilde \Omega }_B}}
\end{array},
\end{equation}
where, with a slight abuse of notations, we define
\[\begin{array}{*{20}{l}}
{{\bf{\tilde s}} = \left[ {\begin{array}{*{20}{l}}
{\Re \left( {\bf{s}} \right)}\\
{\Im \left( {\bf{s}} \right)}
\end{array}} \right], {\bf{\tilde x}} = \beta {\left[ {\begin{array}{*{20}{c}}
	{{\Re{\left( {\bf{x}} \right)}^{\rm{T}}}}&{{\Im{\left( {\bf{x}} \right)}^{\rm{T}}}}
	\end{array}} \right]^{\rm{T}}},}\\
\begin{array}{l}
{\bf{\tilde H}} = \sqrt {1 - \eta } \left[ {\begin{array}{*{20}{c}}
{\Re \left( {{\bf{\hat H}}} \right)}&{ - \Im \left( {{\bf{\hat H}}} \right)}\\
{\Im \left( {{\bf{\hat H}}} \right)}&{\Re \left( {{\bf{\hat H}}} \right)}
\end{array}} \right],\\
{\bf{\tilde E}} = \sqrt \eta  \left[ {\begin{array}{*{20}{c}}
{\Re \left( {{\bf{E}}} \right)}&{ - \Im \left( {{\bf{E}}} \right)}\\
{\Im \left( {{\bf{E}}} \right)}&{\Re \left( {{\bf{E}}} \right)}
\end{array}} \right],\\
{{\tilde \Omega }_B} = \Re \left\{ {{\Omega _B}} \right\} \cup \Im \left\{ {{\Omega _B}} \right\}.
\end{array}
\end{array}\]

The nonlinear precoding problem in \eqref{eq8} can be tackled by the naive exhaustive search with complexity of order
$O(2^{NK})$. The unendurable computational complexities of these methods
impede their application in massive MU-MIMO.

\subsection{Robust Precoding Design}
In order to simultaneously achieve robust and efficient precoding, we reformulate the problem in \eqref{eq8} as a binary optimization problem by leveraging the favorable relation shown in \eqref{eq6}. Define an auxiliary matrix ${\bf{C}}$ and denote
\begin{equation}
\label{eqe2}
{\bf{v}} \in {{\mathbb{R}}^{2NM \times 1}} = {\left[ {{v_1}, \cdots ,{v_{2NM}}} \right]^{\rm{T}}}, \ {v_i} \in {{\tilde \Omega }_1},
\end{equation}
\begin{equation}
\label{eqe22}
{\bf{x}}_R = {\left[ {{{\begin{array}{*{20}{c}}
{{\Re({\bf{x}}^{\rm{T}} })} & \Im({\bf{x}}^{\rm{T}} )
\end{array}}}} \right]^{\rm{T}} },
\end{equation}
it follows from \eqref{eq6} that
\begin{equation}
\label{eqe3}
{\bf{x}}_R = \bf{C} \bf{v}.
\end{equation}
Specially, we have
\begin{equation}
\label{eqd1}
{\bf{C}} = \left[ {\begin{array}{*{20}{c}}
{\omega_1{{\bf{I}}_{2R}}}&{\omega_2{{\bf{I}}_{2R}}}
\end{array}} \right],
\end{equation}
and
\begin{equation}
\label{eqd2}
{\bf{C}} = \left[ {\begin{array}{*{20}{c}}
{\omega_1{{\bf{I}}_{2R}}}&{\omega_2{{\bf{I}}_{2R}}}&{\omega_3{{\bf{I}}_{2R}}}
\end{array}} \right],
\end{equation}
for the 2-bit case and the 3-bit case \footnote{For the B-bit case, we can have ${\bf{C}} = \left[ {\begin{array}{*{20}{c}}
{{\omega _1}{{\bf{I}}_{2N}}}& \cdots &{{\omega _B}{{\bf{I}}_{2N}}}
\end{array}} \right].$ Here, we only employ 1-3 bit DACs as the performance gap between MU-MIMO system with ideal DACs and the one with B-bit $(B\geq 3)$ DACs is negligible, which would be confirmed in Section \ref{smu}.}, respectively. With expressions in \eqref{eqe2} and \eqref{eqe3}, one can rewrite the precoding problem in \eqref{eq8} in the following equivalent form:
\begin{equation}
\label{eq9}
\begin{array}{*{20}{c}}
{{\rm{minimize}}}&{\left\| {{\bf{\tilde s}} - \left( {{\bf{\tilde H}} + {\bf{\tilde E}}} \right){\bf{C}}{\bf{\tilde v}}} \right\|_2^2}\\
{{\mathop{\rm s.t.}\nolimits}}&{\tilde v_1^2 =  \cdots  = \tilde v_{2NM}^2}
\end{array},
\end{equation}
where ${{\tilde v}_i} = \beta {v_i}$, for $i=1,\cdots,2NM$.

By introducing a slack variable $\varepsilon$, we can equivalently rewrite the problem in \eqref{eq9} as
\begin{subequations}
\begin{align}
{\mathop {{\mathop{\rm minmize}\nolimits} }\limits_{\bf{V}} }& \ \ \ \ \varepsilon \\
{{\mathop{\rm s.t.}\nolimits}}& \ \ \ \ {{\mathop{\rm tr}\nolimits} \left( {{\bf{TV}}} \right) \le \varepsilon } \\
{}& \ \ \ \ {{{\left[ {\bf{V}} \right]}_{1,1}} = {{\left[ {\bf{V}} \right]}_{2NM,2NM}}} \\
{}& \ \ \ \ {{{\left[ {\bf{V}} \right]}_{2NM + 1,2NM + 1}} = 1} \\
{}& \ \ \ \ {{\bf{V}} \succeq 0} \\
{}& \ \ \ \ {{\mathop{\rm rank}\nolimits} \left( {\bf{V}} \right) = 1}
\end{align}
\end{subequations}
where
\[\begin{array}{l}
{\bf{T}} = \left[ {\begin{array}{*{20}{c}}
{{{\bf{C}}^{\rm T}}{{\left( {{\bf{\tilde H}} + {\bf{\tilde E}}} \right)}^{\rm T}}\left( {{\bf{\tilde H}} + {\bf{\tilde E}}} \right){\bf{C}}}&{ - {{\bf{C}}^{\rm T}}{{\left( {{\bf{\tilde H}} + {\bf{\tilde E}}} \right)}^{\rm T}}{\bf{\tilde s}}}\\
{ - {{{\bf{\tilde s}}}^{\rm T}}\left( {{\bf{\tilde H}} + {\bf{\tilde E}}} \right){\bf{C}}}&{{{{\bf{\tilde s}}}^{\rm T}}{\bf{\tilde s}}}
\end{array}} \right],\\
{\bf{V}} = {\left[ {\begin{array}{*{20}{c}}
{{{{\bf{\tilde v}}}^{\rm T}}}&1
\end{array}} \right]^{\rm T}}\left[ {\begin{array}{*{20}{c}}
{{{{\bf{\tilde v}}}^{\rm T}}}&1
\end{array}} \right],
\end{array}\]
and ${\bf{V}} \succeq 0$ means ${\bf{V}}$ is a nonnegative definite matrix.

In the sequel, by applying the S-Procedure lemma, we rewrite the constraints
in \eqref{eq1a} and (16b) that involve quadratic inequalities in error vectors in the linear matrix inequality constraints as stated in the following.

\begin{theorem}
\label{lema1}
Given the bounded channel uncertainty in \eqref{eq1a}, the condition in (16b) can be relaxed to
\begin{subequations}
\label{eq10}
\begin{align}
{\mathop{\rm tr}\nolimits} \left( {\left( {{\bf{V}} + {\bf{W}}} \right){{{\bf{\mathord{\buildrel{\lower3pt\hbox{$\scriptscriptstyle\smile$}}
\over H} }}}^T}{\bf{\mathord{\buildrel{\lower3pt\hbox{$\scriptscriptstyle\smile$}}
\over H} }}} \right) \le \varepsilon  - 2\kappa {\eta ^2} ,\\
\left[ {\begin{array}{*{20}{c}}
{{\bf{V}} - \kappa {\bf{I}}}&{\bf{V}}\\
{\bf{V}}&{ - {\bf{W}}}
\end{array}} \right] \succeq {\bf{0}},
\end{align}
\end{subequations}
where ${\bf W} \in \mathbb{R}^{{(2NM+1)} \times {(2NM+1)}} $ is an auxiliary matrix and $\kappa$ is a slack variable.
\end{theorem}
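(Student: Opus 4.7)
The plan is to view (16b) as a constraint that must hold uniformly over the channel-error ball \eqref{eq1a}, and to convert this semi-infinite condition into the finite LMI pair (17a)--(17b) via the S-procedure, with $\kappa$ acting as the Lagrange-type multiplier and $\mathbf{W}$ as an auxiliary slack matrix. First, I would expand $\mathrm{tr}(\mathbf{T}\mathbf{V})$ into the sum of a deterministic part, formed from $\tilde{\mathbf{H}}$, $\mathbf{C}$ and $\tilde{\mathbf{s}}$ alone, and a perturbation part that is jointly affine and quadratic in $\tilde{\mathbf{E}}$. The deterministic part can be grouped as $\mathrm{tr}(\mathbf{V}\,\mathord{\buildrel{\lower3pt\hbox{$\scriptscriptstyle\smile$}}\over H}{}^{\mathrm T}\,\mathord{\buildrel{\lower3pt\hbox{$\scriptscriptstyle\smile$}}\over H})$ with the augmented deterministic channel $\mathord{\buildrel{\lower3pt\hbox{$\scriptscriptstyle\smile$}}\over H}=[\tilde{\mathbf{H}}\mathbf{C},\,-\tilde{\mathbf{s}}]$, which is exactly the object appearing on the left of (17a).

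Next, I would translate the complex bound \eqref{eq1a} into its real counterpart on $\tilde{\mathbf{E}}$. Because $\tilde{\mathbf{E}}$ duplicates the real and imaginary blocks of $\mathbf{E}$ and carries an extra $\sqrt{\eta}$ factor, the resulting constraint reads $\mathrm{tr}(\tilde{\mathbf{E}}\tilde{\mathbf{E}}^{\mathrm T})\le 2\eta^2$, which accounts for the $2\kappa\eta^2$ term in (17a). With (16b) required for every $\tilde{\mathbf{E}}$ in this Frobenius ball, the S-lemma provides the sufficient condition that there exists $\kappa\ge 0$ with
\[
\mathrm{tr}(\mathbf{T}\mathbf{V}) - \kappa\,\mathrm{tr}(\tilde{\mathbf{E}}\tilde{\mathbf{E}}^{\mathrm T}) \;\le\; \varepsilon - 2\kappa\eta^2 \quad \text{for every } \tilde{\mathbf{E}}.
\]
Vectorising via $\mathbf{e}=\mathrm{vec}(\tilde{\mathbf{E}})$, the left-hand side becomes a quadratic form $\mathbf{e}^{\mathrm T}(\mathbf{A}-\kappa\mathbf{I})\mathbf{e}+2\mathbf{b}^{\mathrm T}\mathbf{e}+c$, whose coefficients $\mathbf{A},\mathbf{b},c$ are determined by $\mathbf{V}$ and by the deterministic channel.

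The final step is to recast the requirement that this quadratic be upper-bounded by $\varepsilon-2\kappa\eta^2$ for every $\mathbf{e}$ as the block LMI (17b). Introducing the slack $\mathbf{W}\succeq 0$ to dominate the outer-product cross-term coupling $\mathbf{e}$ to the lifting vector $[\tilde{\mathbf{v}}^{\mathrm T},1]^{\mathrm T}$ implicit in $\mathbf{V}$, a Schur-complement argument identifies the negative semidefiniteness of $\mathbf{A}-\kappa\mathbf{I}$ (together with the control of $\mathbf{b}$) with the block inequality of (17b); the residual deterministic part then collapses to $\mathrm{tr}((\mathbf{V}+\mathbf{W})\,\mathord{\buildrel{\lower3pt\hbox{$\scriptscriptstyle\smile$}}\over H}{}^{\mathrm T}\,\mathord{\buildrel{\lower3pt\hbox{$\scriptscriptstyle\smile$}}\over H})\le \varepsilon-2\kappa\eta^2$, which is exactly (17a).

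The main obstacle is the bookkeeping needed to match the affine-quadratic form in $\mathrm{vec}(\tilde{\mathbf{E}})$ to the $(2NM+1)\times(2NM+1)$ block partition of (17b): the splitting of $\mathbf{V}$ into $\tilde{\mathbf{v}}$- and scalar-blocks, the embedding of $\mathbf{C}$ inside $\mathord{\buildrel{\lower3pt\hbox{$\scriptscriptstyle\smile$}}\over H}$, the sign conventions in the off-diagonal $\mathbf{V}$ and $-\mathbf{W}$ blocks, and the normalisation of $\kappa$ all have to be tracked simultaneously. Once this identification is fixed, verifying (17a)--(17b)$\,\Rightarrow\,$(16b) reduces to a routine Schur-complement and completion-of-squares computation, and tightness of the relaxation is inherited from the losslessness of the S-procedure for a single norm-ball constraint.
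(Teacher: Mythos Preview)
Your proposal is correct and follows essentially the same route as the paper: expand $\mathrm{tr}(\mathbf{T}\mathbf{V})$ as a quadratic in the vectorised error, convert the complex ball \eqref{eq1a} into the real bound $\mathrm{tr}(\tilde{\mathbf{E}}\tilde{\mathbf{E}}^{\mathrm T})\le 2\eta^2$, apply the S-procedure to obtain a single large LMI, and then use a Schur-complement step together with the auxiliary matrix $\mathbf{W}$ to split that LMI into (17a)--(17b). One small slip: you write ``the slack $\mathbf{W}\succeq 0$'', but the $(2,2)$ block of (17b) is $-\mathbf{W}$, so positive semidefiniteness of the block matrix forces $\mathbf{W}\preceq 0$; keep this sign straight when you do the Schur-complement bookkeeping.
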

\begin{proof}{:}
We start by restating the S-Procedure lemma.
\begin{lemma} [S-Procedure lemma, \cite{b7}] Let ${\varphi _k}\left( {\bf{o}} \right)$, for $k=1,2$, be defined as
\[{\varphi _k}\left( {\bf{o}} \right) = {{\bf{o}}^{\rm T}}{\bf{A}}_k^{\rm T}{\bf{o}} + 2{\bf{b}}_k^{\rm T}{\bf{o}} + {c_k},\]
where ${\bf{A}}_k \in \mathbb{R}^{M \times M}$, ${\bf{b}}_k \in \mathbb{R}^{M \times 1}$ and
${c_k} \in \mathbb{R}$. Suppose there exists an ${{\bf{\tilde o}}}$ with ${\varphi _2}\left( {{\bf{\tilde o}}} \right) < 0$. Then, for any ${\bf{o}}$,  the equalities
\[\left\{ \begin{array}{l}
{\varphi _1}\left( {\bf{o}} \right) \ge 0\\
{\varphi _2}\left( {\bf{o}} \right) \le 0
\end{array} \right. ,\]
hold if and only if there exists $\kappa > 0$ such that
\[\left[ {\begin{array}{*{20}{c}}
{{{\bf{A}}_1} + \kappa {{\bf{A}}_2}}&{{{\bf{b}}_1} + {{\bf{b}}_2}}\\
{{\bf{b}}_1^{\rm T} + {\bf{b}}_2^{\rm T}}&{{c_1} + \kappa {c_2}}
\end{array}} \right] \succeq {\bf 0} .\]
\end{lemma}

Since the trace operator satisfies:
\begin{equation}
\label{eqtr}
\begin{array}{l}
{\rm tr}\left( {{\bf{XY}}{{\bf{X}}^{\rm T}}} \right) = {\mathop{\rm vec}\nolimits} \left( {\bf{X}} \right)\left( {{\bf{I}} \otimes {\bf{X}}} \right){\mathop{\rm vec}\nolimits} {\left( {\bf{X}} \right)^{\rm T}}\\
{\rm tr}\left( {{\bf{X}}{{\bf{Y}}^{\rm T}}} \right) = {\mathop{\rm vec}\nolimits} \left( {\bf{X}} \right){\mathop{\rm vec}\nolimits} {\left( {\bf{Y}} \right)^{\rm T}}
\end{array},
\end{equation}
the first constraint of \eqref{eq8} can be reformulated as
\begin{equation}
\label{eqt1}
\begin{aligned}
\left\| {{\bf{\tilde s}} - \left( {{\bf{\tilde H}} + {\bf{\tilde E}}} \right){\bf{C\tilde v}}} \right\|_2^2
 = {\mathop{\rm tr}\nolimits} \left( {\left( {{\bf{\mathord{\buildrel{\lower3pt\hbox{$\scriptscriptstyle\smile$}}
\over H} }} + {\bf{\mathord{\buildrel{\lower3pt\hbox{$\scriptscriptstyle\smile$}}
\over E} }}} \right){\bf{V}}\left( {{{{\bf{\mathord{\buildrel{\lower3pt\hbox{$\scriptscriptstyle\smile$}}
\over H} }}}^T} + {{{\bf{\mathord{\buildrel{\lower3pt\hbox{$\scriptscriptstyle\smile$}}
\over E} }}}^{\rm{T}}}} \right)} \right)\\
 = {\mathop{\rm tr}\nolimits} \left( {{\bf{\mathord{\buildrel{\lower3pt\hbox{$\scriptscriptstyle\smile$}}
\over H} V}}{{{\bf{\mathord{\buildrel{\lower3pt\hbox{$\scriptscriptstyle\smile$}}
\over H} }}}^T}} \right) + {\mathop{\rm tr}\nolimits} \left( {{\bf{\mathord{\buildrel{\lower3pt\hbox{$\scriptscriptstyle\smile$}}
\over E} V}}{{{\bf{\mathord{\buildrel{\lower3pt\hbox{$\scriptscriptstyle\smile$}}
\over E} }}}^{\rm{T}}}} \right) + 2{\mathop{\rm tr}\nolimits} \left( {{\bf{\mathord{\buildrel{\lower3pt\hbox{$\scriptscriptstyle\smile$}}
\over H} V}}{{{\bf{\mathord{\buildrel{\lower3pt\hbox{$\scriptscriptstyle\smile$}}
\over E} }}}^{\rm{T}}}} \right)\\
 = {\mathop{\rm tr}\nolimits} \left( {{\bf{\mathord{\buildrel{\lower3pt\hbox{$\scriptscriptstyle\smile$}}
\over H} V}}{{{\bf{\mathord{\buildrel{\lower3pt\hbox{$\scriptscriptstyle\smile$}}
\over H} }}}^T}} \right) + {{{\bf{\mathord{\buildrel{\lower3pt\hbox{$\scriptscriptstyle\smile$}}
\over e} }}}^{\rm{T}}}\left( {{\bf{I}} \otimes {\bf{V}}} \right){\bf{\mathord{\buildrel{\lower3pt\hbox{$\scriptscriptstyle\smile$}}
\over e} }} + {\mathop{\rm vec}\nolimits} {\left( {{\bf{\mathord{\buildrel{\lower3pt\hbox{$\scriptscriptstyle\smile$}}
\over H} V}}} \right)^T}{\bf{\mathord{\buildrel{\lower3pt\hbox{$\scriptscriptstyle\smile$}}
\over e} }} \le \varepsilon,
\end{aligned}
\end{equation}
where ${\bf{\mathord{\buildrel{\lower3pt\hbox{$\scriptscriptstyle\smile$}}
\over H} }} = \left[ {\begin{array}{*{20}{c}}
{{\bf{\tilde HC}}}&{{\bf{\tilde s}}}
\end{array}} \right],{\bf{\mathord{\buildrel{\lower3pt\hbox{$\scriptscriptstyle\smile$}}
\over E} }} = \left[ {\begin{array}{*{20}{c}}
{{\bf{\tilde H\tilde E}}}&{\bf{0}}
\end{array}} \right],{\bf{\mathord{\buildrel{\lower3pt\hbox{$\scriptscriptstyle\smile$}}
\over e} }} = {\mathop{\rm vec}\nolimits} \left( {{\bf{\mathord{\buildrel{\lower3pt\hbox{$\scriptscriptstyle\smile$}}
\over E} }}} \right)$, and $\otimes$ denotes the Kronecker product.
Besides, the complex-valued constraint in \eqref{eq1a} can be equivalently expressed by a real-valued one, of form
\begin{equation}
\label{eqt2}
\begin{array}{l}
\left\| {{\bf{\mathord{\buildrel{\lower3pt\hbox{$\scriptscriptstyle\smile$}}
\over e} }}} \right\|_2^2 = {\rm tr}\left( {{\bf{\tilde E}}{{{\bf{\tilde E}}}^{\rm T}}} \right)\\
 = 2\eta \ {\rm tr}\left( {\Re \left( {\bf{E}} \right)\Re {{\left( {\bf{E}} \right)}^{\rm T}} + \Im \left( {\bf{E}} \right)\Im {{\left( {\bf{E}} \right)}^{\rm T}}} \right)\\
 = 2{\rm tr}\left( {{\bf{E}}{{\bf{E}}^{\rm H}}} \right) \le 2{\eta ^2}
\end{array}.
\end{equation}

According to Lemma 1, the inequalities in \eqref{eqt1} and \eqref{eqt2} hold if and only if there exists $\kappa >0$ such that
\begin{equation}
\label{eqt3}
\left[ {\begin{array}{*{20}{c}}
{ - {\bf{I}} \otimes {\bf{V}} + \kappa {\bf{I}}}&{ - {\mathop{\rm vec}\nolimits} \left( {{\bf{\mathord{\buildrel{\lower3pt\hbox{$\scriptscriptstyle\smile$}}
\over H} V}}} \right)}\\
{ - {\mathop{\rm vec}\nolimits} {{\left( {{\bf{\mathord{\buildrel{\lower3pt\hbox{$\scriptscriptstyle\smile$}}
\over H} V}}} \right)}^{\rm T}}}&{ - {\mathop{\rm tr}\nolimits} \left( {{\bf{\mathord{\buildrel{\lower3pt\hbox{$\scriptscriptstyle\smile$}}
\over H} V}}{{{\bf{\mathord{\buildrel{\lower3pt\hbox{$\scriptscriptstyle\smile$}}
\over H} }}}^{\rm T}}} \right) + \varepsilon  - 2\kappa {\eta ^2}}
\end{array}} \right]  \succeq 0.
\end{equation}
Using Schur's complement \cite{b11}, \eqref{eqt3} can be reformulated as
\begin{equation}
\label{eqt4}
\begin{aligned}
- \ {\mathop{\rm vec}\nolimits} {\left( {{\bf{\mathord{\buildrel{\lower3pt\hbox{$\scriptscriptstyle\smile$}}
\over H} V}}} \right)^{\rm T}}{\left( { - {\bf{I}} \otimes {\bf{V}} + \kappa {\bf{I}}} \right)^{ - 1}}{\mathop{\rm vec}\nolimits} \left( {{\bf{\mathord{\buildrel{\lower3pt\hbox{$\scriptscriptstyle\smile$}}
\over H} V}}} \right) \ \ \ \ \\
- \ {\mathop{\rm tr}\nolimits} \left( {{\bf{\mathord{\buildrel{\lower3pt\hbox{$\scriptscriptstyle\smile$}}
\over H} V}}{{{\bf{\mathord{\buildrel{\lower3pt\hbox{$\scriptscriptstyle\smile$}}
\over H} }}}^{\rm T}}} \right) + \varepsilon  - 2\kappa {\eta ^2} \ge 0,
\end{aligned}
\end{equation}
or equivalently,
\begin{equation}
\label{eqt5}
\begin{aligned}
- \ {\mathop{\rm tr}\nolimits} \left( {\left( {{\bf{V}}{{\left( { - {\bf{V}} + \kappa {\bf{I}}} \right)}^{ - 1}}{\bf{V}} - {\bf{W}}} \right){{{\bf{\mathord{\buildrel{\lower3pt\hbox{$\scriptscriptstyle\smile$}}
\over H} }}}^{\rm T}}{\bf{\mathord{\buildrel{\lower3pt\hbox{$\scriptscriptstyle\smile$}}
\over H} }}} \right) \ \ \ \ \ \ \ \\ - \ {\mathop{\rm tr}\nolimits} \left( {\left( {{\bf{V}} + {\bf{W}}} \right){{{\bf{\mathord{\buildrel{\lower3pt\hbox{$\scriptscriptstyle\smile$}}
\over H} }}}^{\rm T}}{\bf{\mathord{\buildrel{\lower3pt\hbox{$\scriptscriptstyle\smile$}}
\over H} }}} \right) + \varepsilon  - 2\kappa {\eta ^2}  \ge 0,
\end{aligned}
\end{equation}
where ${\bf W} \in \mathbb{R}^{{(2NM+1)} \times {(2NM+1)}} $ is an auxiliary matrix. Finally, using Lemma 2 in \cite{b12}, we obtain \eqref{eq10}, which completes the proof.
\end{proof}

With Theorem \ref{lema1}, the optimization problem in \eqref{eq9} can be reformulated as
\begin{equation}
\label{eq11}
\begin{array}{*{20}{c}}
{\mathop {{\mathop{\rm minmize}\nolimits} }\limits_{{\bf{V}},{\bf{W}}, {\kappa > 0}} }&\varepsilon \\
{{\mathop{\rm s.t.}\nolimits}}&{ \rm (16b)-(16f), \ (17a), \ (17b)}
\end{array}.
\end{equation}
The precoding problem in \eqref{eq11} is still NP-hard due to the non-convex rank constraint in (16f). We follow the standard strategy \cite{b9} that relaxes the problem in \eqref{eq11} by omitting the non-convex constraint (16f). Then the remaining convex problem in \eqref{eq11} can be efficiently solved by numerical algorithms, e.g., the interior point method \cite{b13}.  Finally, let ${\hat {\bf{v}}}$ denote the solution of \eqref{eq11}, using the rounding strategy as described in \cite{b8},
we can obtain the desired real-valued precoded vector by
\begin{equation}
\label{eqp10}
\hat{\bf{ x}}_R = {\bf{C}} \ {\mathop{\rm sign}\nolimits} \left[ {\hat{\bf{v}}} \right].
\end{equation}
Then, the complex-valued precoded vector can be determined as
\begin{equation}
\label{eqp11}
{\hat {\bf{x}}} = {\left[ {\hat{\bf{ z}}_R} \right]_{1:N}} + {\rm i}  {\left[ {\hat{\bf{ z}}}_R \right]_{N + 1:2N}}.
\end{equation}
We refer to the proposed method as RSDR. The complexity analysis of RSDR is presented in the following.

\subsection{Complexity Analysis}

The quantized precoding problem \eqref{eq7} can be directly tackled by the naive exhaustive
search (NES) \cite{b18} with complexity of order $O(4^N)$ or by the sphere decoding (SD) method \cite{b10, b19} of exponential complexity. In this paper, the original quantized precoding problem is recast into a tractable formulation \eqref{eq11} and finally solved by the interior point method.

Since the proposed RSDR
has a similar form to the standard SDP in \cite{b20}, we can borrow the analytical result therein. According to \cite{b20}, the worst-case complexity of solving the SDP is $O\left( {\sqrt {N + K} \left( {{N^3} + {N^2}K + {K^3}} \right)\log \left( {1/\varepsilon } \right)} \right)$, where $K$ is the number of constraints in \eqref{eq11} and $\varepsilon$ is the solution precision. In practice, it has been shown in \cite{b20} that $K$ scales more typically with $N$, and hence the number of operations is bounded by $O\left(N^3 \right)$.

\section{Simulations}
\label{smu}
\subsection{System setup}

This section evaluates the performance of the proposed precoder via numerical simulations in comparison with the ADMM precoder and the FA precoder. Each provided result is an average over 1000 independent runs.

We use the Lloyd quantization method \cite{b15} to determine the quantization outputs \eqref{eq5} and the least square regression method \cite{b16} to calculate the coefficients in \eqref{eqd1} and \eqref{eqd2}. Specially, for the 2-bit case, we have ${\omega _1} = 1/\sqrt 5, \ {\omega _2} = 2/\sqrt 5$. For the 3-bit case, we have ${\omega _1} = 1/\sqrt 21, \ {\omega _2} = 2/\sqrt 21, $ and ${\omega _3} = 4/\sqrt 21$. We assume that the entries of ${\bf{E}}$ are zero-mean
Gaussian distributed, i.e., ${\left[ {{\bf{E}}} \right]_{i,j}} \sim \mathcal{CN}\left( {0,1} \right)$.
The Water Filling (WF) precoder \cite{b14} with infinite-resolution DACs and perfect CSI is regarded as the benchmark.

\subsection{Performance Evaluation}

\begin{figure}[htbp]
\centerline{\includegraphics[width=0.475\textwidth]{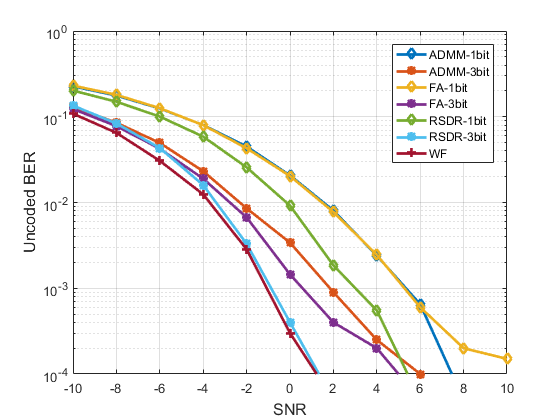}}
\caption{Uncoded BER of the compared precoders with different quantization levels and QPSK signaling.}
\label{fig1}
\end{figure}

In Fig. \ref{fig1}, we consider a 128-antenna MU-MIMO system with 10 UTs. The CSI-related parameter is $\eta = 0.2$. It can be observed that the performance of all the compared precoders, in terms of BER, improve as a result of a increased bits of DACs.

One can also observe from Fig. \ref{fig1} that
the proposed RSDR precoder outperforms the ADMM precoder and the FA precoder.  For instance, with same average transmit power and 3-bit DACs, the proposed precoding scheme can attain 3 dB of target BER whenever ${SNR} > -1dB$, whereas the FA precoder and the ADMM precoder respectively require higher SNR conditions (${SNR} > 1dB$ and ${SNR} > 2dB$), to support 3 dB of BER target at the same CSI condition. Besides, the performance gap between the ideal ZF precoder (infinite DACs) with perfect CSI and the proposed precoder with 3-bit DACs is negligible, indicating the robustness of the the proposed precoding scheme.

\begin{figure}[htbp]
\centerline{\includegraphics[width=0.475\textwidth]{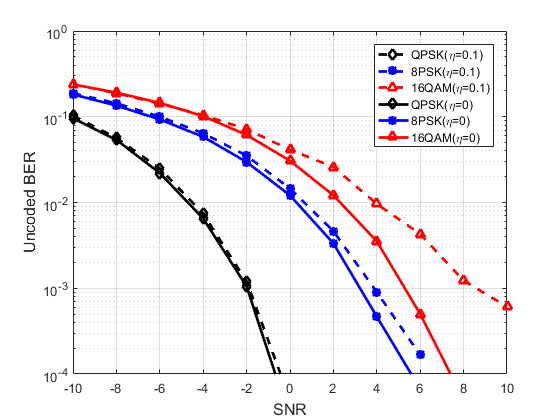}}
\caption{Uncoded BER of the proposed precoder in the case of a MU-MIMO system with 3bit DACs and different modulation schemes.}
\label{fig2}
\end{figure}

In Fig. \ref{fig2}, we investigate the performance of the proposed precoder for different modulation schemes, QPSK, 8PSK and 16QAM. The BS is equipped with 128 antennas and serves 16 UTs with different channel uncertainties, e.g., $\eta = 0, 0.1$. It can be seen that the proposed RSDR precoder shows robustness for all the considered modulation schemes under channel uncertainties. Specially, compared to the ideal case ($\eta = 0$), the performance gaps are about 0.2 dB, 1 dB, and 3 dB, for QPSK, 8PSK and 16QAM with a target BER of $10^{-3}$, respectively.

\begin{figure}[htbp]
	\centerline{\includegraphics[width=0.475\textwidth]{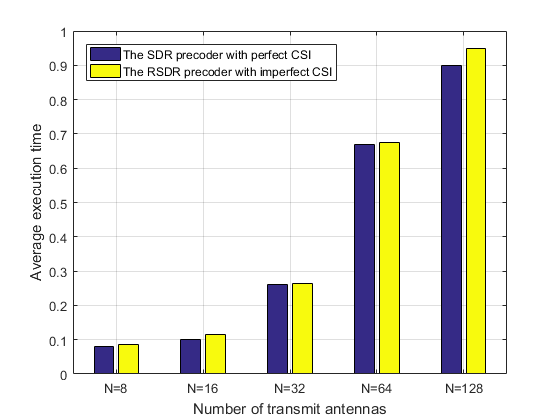}}
	\caption{Runtime comparison of the proposed precoder versus the number of the BS antennas. We consider a N-antenna MU-MIMO system using QPSK modulation. The number of UTs is U = 10.}
	\label{fig3}
\end{figure}

Fig. \ref{fig3} compares the average execution time (in seconds) of RSDR in the case of different channel conditions. Compared to the SDR precoder \cite{b4} that
has perfect access to CSI, the proposed RSDR precoder only incurs two added constraints. The algorithms are performed on a desktop PC with an Intel Core I7-7700K CPU at 4.2 GHz with 32 GB RAM.   It can be seen that the average execution time of RSDR increases as the number of the BS antennas increases. Compared to the perfect CSI case, the RSDR can obtain robust performance at the expense of slightly more numerical computations,  indicating the effectiveness of the new precoder.

In summary, the proposed RSDR enables robust precoding against channel uncertainty in polynomial time \cite{b20}. We note here that the efficiency of the proposed precoder needs to be improved to meet the demand in 5G or 6G communication systems, in which the BS is foreseen to be equipped with hundreds, even thousands of antennas.

\section{Conclusions}
\label{Sec:f}

In this paper, we have investigated the quantized precoding problem for the MU-MIMO in the presence of imperfect CSI. By exploiting the bridging relation between the multi-bit DACs outputs and the single-bit ones, and taking advantaging of the S-procedure lemma, the original precoding problem is reformulated into a tractable formulation and solved by the SDR method. Simulation results have verified that the proposed precoder is robust to various channel uncertainties and can support the MU-MIMO system with higher-order modulations.


\section{Acknowledgements}
\label{Sec:g}

The authors would like to thank for technical experts: Tian Pan, Guangyi Yang, Yingzhe Li,and Rui Gong, all from Huawei Technologies for their fruitful discussions.

\vspace{12pt}

\end{document}